\def\cA{{\cal A}}
\def\cL{{\cal L}}
\def\cT{{\cal T}}
\def\dfs{\mathrm{DFS}}
\def\add{\mathrm{add}}
\def\asq{\mathrm{Asquare}}
\def\rep{\mathrm{repeat}}
\def\last{\mathrm{last}}
\def\free{\mathrm{free}}
\def\gr{{\sf gr}}
\def\ch{{\sf ch}}
\def\re{{\sf re}}
\def\br{{\sf br}}
\def\Bf{{\sf bf}}
\def\SF{{\sf SF}}
\def\PF{{\sf PF}}
\def\ASF{{\sf ASF}}
\def\ACF{{\sf ACF}}
\def\A4F{{\sf A4F}}
\begin{document}

\title{Branching Frequency and Markov Entropy of Repetition-Free Languages}
\author{Elena A. Petrova \and Arseny M. Shur\thanks{Supported by the Ministry of Science and Higher Education of the Russian Federation (Ural Mathematical Center project No. 075-02-2020-1537/1).}}
\institute{Ural Federal University, Ekaterinburg, Russia\\ \{elena.petrova,arseny.shur\}@urfu.ru}
\maketitle

\begin{abstract}
We define a new quantitative measure for an arbitrary factorial language: the entropy of a random walk in the prefix tree associated with the language; we call it Markov entropy. We relate Markov entropy to the growth rate of the language and to the parameters of branching of its prefix tree. We show how to compute Markov entropy for a regular language. Finally, we develop a framework for experimental study of Markov entropy by modelling random walks and present the results of  experiments with power-free and Abelian-power-free languages.
\keywords{Power-free language, Abelian-power-free language,  Markov entropy, prefix tree, random walk}
\end{abstract}

\section{Introduction}
Formal languages closed under taking factors of their elements (factorial languages) are natural and popular objects in combinatorics. Factorial languages include sets of factors of infinite words, sets of words avoiding patterns or repetitions, sets of minimal terms in algebraic structures, sets of palindromic rich words and many other examples. One of the main combinatorial parameters of factorial languages is their asymptotic growth. Usually, ``asymptotic growth'' means asymptotic behaviour of the function $C_L(n)$ which returns the number of length-$n$ words in the language $L$. (In algebra, the function which counts words of length \emph{at most $n$} is more popular.)

In this paper we propose a different parameter of asymptotic growth, based on representation of factorial languages as \emph{prefix trees} which are diagrams of the prefix order on words. Given such an infinite directed tree, one can view each word as a walk starting at the root. If we consider \emph{random walks}, in which the next node is chosen uniformly at random among the children of the current node, we can define their \emph{entropy} (the measure of expected uncertainty of a single step). As a random walk is a Markov chain, we call this parameter the Markov entropy of a language. This parameter was earlier considered for a particular subclass of regular factorial languages in the context of antidictionary data compression \cite{CMRS99}. However, it seems that more general cases were not analysed up to now. Our interest to Markov entropy is twofold. First, it allows to estimate growth properties of a language from statistics of experiments where exact methods do not work. Second, it is related to a natural and efficient (at least theoretically) data compression scheme, which encodes the choices made during a walk in the prefix tree.

Our contribution is as follows. In Section~\ref{s:entropy} we define order-$n$ Markov entropy $\mu_n(L)$ of a language $L$ through length-$n$ random walks in its prefix tree $\cT(L)$ and the Markov entropy $\mu(L)=\lim_{n\to\infty} \mu_n(L)$. Then we relate Markov entropy to the exponential growth rate of $L$ and to the parameter called \emph{branching frequency} of a walk in $\cT(L)$. In Section~\ref{ss:reg} we show how to compute Markov entropy for a regular language. Then in Section~\ref{ss:walks} we propose a model of random walk for an arbitrary factorial language through depth-first search and show how to recover branching frequency from observable parameters of a walk. Finally, in Section~\ref{s:exper} we present algorithms used in the experimental study of Markov entropy for power-free and Abelian-power-free languages and the results of this study.

\section {Preliminaries}

We study words and languages over finite alphabets; $\Sigma^*$ denotes the set of all words over an alphabet $\Sigma{=}\{0,\ldots,\sigma{-}1\}$. Standard notions of prefix, suffix, factor are used. We use the array notation $w=w[1..n]$ for a word of length $n=|w|$; thus $w[i..i{+}k{-}1]$ stands for the length-$k$ factor of $w$ starting at position $i$. In particular, $w[i..i]=w[i]$ is the $i$th letter of $w$ and $w[i..i{-}1]$ is the empty word, denoted by $\lambda$. %Infinite words have the form $\mathbf{w}[1..\infty]$.
 A word $w$ is \emph{right extendable} in a language $L$ if $L$ contains infinitely many words with the prefix $w$; $\re(L)$ denotes the set of all words which are right extendable in $L$.

A word $w$ has \emph{period} $p$ if $w[1..|w|{-}p]=w[p{+}1..w]$. For an integer $k>1$, the \emph{$k$-power} of a word $w$ is the concatenation of $k$ copies of $w$. For an arbitrary real $\beta>1$, the \emph{$\beta$-power} (resp., the \emph{$\beta^+\!$-power}) of $w$ is the prefix of length $\lceil \beta|w|\rceil$ (resp., $\lfloor \beta|w|+1\rfloor$) of the infinite word $w^\infty=ww\cdots w\cdots$. E.g., $(010)^{2^+}=(010)^{7/3}=0100100$, $(010)^{5/2}=(010)^{(5/2)^+}=01001001$. A word is \emph{$\beta$-power-free} if it has no $\beta$-powers as factors; the $k$-ary \emph{$\beta$-power-free language} $\PF(k,\beta)$ consists of all $\beta$-power-free words over the $k$-letter alphabet. The same definitions apply to $\beta^+\!$-powers. The crucial result on the power-free languages is Threshold theorem, conjectured by Dejean \cite{Dej72} and proved by efforts of many authors \cite{Pan84c,Mou92,Car07,MoCu07,CuRa11,Rao11}. The theorem establishes the boundary between finite and infinite power-free languages: the minimal infinite $k$-ary power-free languages are $\PF(3,{\frac 74}^+)$, $\PF(4,{\frac 75}^+)$, and $\PF(k,{\frac k{k-1}}\!^+)$ for $k=2$ and $k\ge 5$. These languages are called \emph{threshold languages}.

\emph{Parikh vector} of a word $w$, denoted by $\vec V(w)$, is a length-$\sigma$ vector such that $\vec V(w)[i]$ is the number of occurrences of the letter $i$ in $w$ for each $i\in\Sigma$. Two words with equal Parikh vectors are said to be \emph{Abelian equivalent}. A concatenation of $k$ Abelian equivalent words is an \emph{Abelian $k$th power}. Abelian $k$-power-free words are defined similar to $k$-power-free words; Abelian square-free (resp., cube-free, 4-power-free) languages over four (resp., three, two) letters are infinite \cite{Dek79, Ker92}.

A language $L\subseteq\Sigma^*$ is \emph{factorial} if it contains all factors of each its element. Power-free and Abelian-power-free languages are obviously factorial. The relation ``to be a prefix (resp., a suffix, a factor)'' is a partial order on any language. The diagram of the prefix order of a factorial language $L$ is a directed tree $\cT(L)$ called \emph{prefix tree}\footnote{One can choose to study the tree obtained from the suffix order in a dual way, but if a language is closed under reversal, as in the case of power-free languages, then these two trees are isomorphic.}. Prefix trees are main objects of study in this paper. For convenience, we assume that an edge of the form $(w,wa)$ in $\cT(L)$ is labeled by the letter $a$; in this way, the path from the root to $w$ is labeled by $w$.

For regular languages we use deterministic finite automata with partial transition function (PDFA), viewing them as labelled digraphs. We assume that all states of a PDFA are reachable from the initial state; since we study factorial languages, we also assume that all states are final (so a PDFA accepts a word iff it can read it). When a PDFA $\cA$ is fixed, we write $q.w$ for the state of $\cA$ obtained by reading $w$ starting at the state $q$.

\emph{Combinatorial complexity} (or \emph{growth function}) of a language $L\subseteq\Sigma^*$ is a function counting length-$n$ words in $L$: $C_L(n)=|L\cap\Sigma^n|$. The \emph{growth rate} $\gr(L)=\limsup_{n\to\infty}(C_L(n))^{1/n}$ describes its asymptotic growth. Combinatorial complexity of factorial languages is submultiplicative: $C_L(m+n)\le C_L(m)C_L(n)$; by Fekete's lemma \cite{Fek23}, this implies $\gr(L)=\lim_{n\to\infty}(C_L(n))^{1/n}=\inf_{n\in\mathbb{N}}(C_L(n))^{1/n}$. A survey of techniques and results on computing growth rates for regular and power-free languages can be found in \cite{Sh12rev}.

\paragraph*{Infinite trees.} We consider infinite $k$-ary rooted trees: the number of children of any node is at most $k$. Nodes with more than one child are called \emph{branching points}. The \emph{level} $|u|$ of a node $u$ is the length of the path from the root to $u$. A subtree $\cT_u$ of a tree $\cT$ consists of the node $u$ and all its descendants. The tree $\cT$ is \emph{$p$-periodic} (resp., \emph{$p$-subperiodic}) if there exists a function $f$ on the set of nodes such that each subtree $\cT_u$ is an isomorphic copy (resp., is a subgraph) of the subtree $\cT_{f(u)}$ and $|f(u)|\le p$. The prefix tree of any factorial language $L$ is 0-subperiodic, since suffixes of elements of $L$ are also in $L$. Furthermore, $\cT(L)$ is $p$-periodic  for some $p$ iff $L$ is regular\footnote{Note that $p$-periodicity means exactly that $L$ has finitely many quotients, which is equivalent to regularity.}.

There are two widely used parameters of growth for infinite trees; see, e.g., \cite{LyPe16}. ``Horizontal'' growth is measured by the \emph{growth rate} $\gr(\cT)= \lim_{n\to\infty} (T_n)^{1/n}$, where $T_n$ is the number of nodes of level $n$, whenever this limit exists. Hence, $\gr(\cT(L))=\gr(L)$. ``Vertical'' growth is measured by the \emph{branching number} $\br(\cT)$, which is usually defined using the notion of network flow. However, the result of Furstenberg \cite{Fur67} says that $\br(\cT)=\gr(\cT)$ for subperiodic trees, so for prefix trees we have only one parameter. In Section~\ref{s:entropy}, we propose one more parameter of growth using the notion of entropy.

\paragraph*{Entropy.} Let $\xi=({x_1}_{|p_1},\ldots, {x_n}_{|p_n})$ be a discrete finite-range random variable, where $p_i, i=1,\ldots,n$, is the probability of the outcome $x_i$. The \emph{entropy} of $\xi$ is the average amount of information in the outcome of a single experiment: $H(\xi)=-\sum_{i=1}^k p_i\log p_i$ (throughout the paper, $\log$ stands for the binary logarithm). Lemma~\ref{l:shannon} below contains basic properties of entropy, established by Shannon \cite{Shan48}. For more details we refer the reader to the book \cite{Ash65}.

\begin{lemma} \label{l:shannon}
(1) For a random variable $\xi=({x_1}_{|p_1},\ldots, {x_n}_{|p_n})$, $H(\xi)\le \log n$; the equality holds for the uniform distribution only.\\
(2) For a random vector $(\xi,\eta)$, $H(\xi,\eta)\le H(\xi)+H(\eta)$; the equality holds iff $\xi$ and $\eta$ are independent.
\end{lemma}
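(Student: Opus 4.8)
The plan is to derive both parts from a single elementary fact: for every real $x>0$ one has $\ln x \le x-1$, with equality iff $x=1$ (equivalently, $\log$ is concave, so Jensen's inequality applies). From this I would first record the \emph{Gibbs inequality}: for any two probability distributions $(p_i)$ and $(q_i)$ on a common index set, $-\sum_i p_i\log p_i \le -\sum_i p_i\log q_i$, with equality iff $p_i=q_i$ for all $i$; the difference of the two sides is the nonnegative quantity $\sum_i p_i\log\frac{p_i}{q_i}$ (the relative entropy), whose nonnegativity and equality case both follow from $\ln x\le x-1$ applied to $x=q_i/p_i$. Both statements of the lemma are then immediate specializations.

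For part (1), I would apply Gibbs' inequality with the uniform distribution $q_i=1/n$. Then $-\sum_i p_i\log q_i = \log n\sum_i p_i = \log n$, which yields $H(\xi)\le\log n$; equivalently, $\log n-H(\xi)=\sum_i p_i\log\frac{p_i}{1/n}$ is the relative entropy of $(p_i)$ against the uniform distribution. The equality clause of Gibbs' inequality then gives equality exactly when $p_i=1/n$ for every $i$, i.e. only for the uniform distribution, as claimed.

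For part (2), I would write the joint distribution as $p_{ij}$, the probability that $\xi=x_i$ and $\eta=y_j$, with marginals $p_i=\sum_j p_{ij}$ and $q_j=\sum_i p_{ij}$. Expanding the definitions gives $H(\xi)+H(\eta)=-\sum_{i,j}p_{ij}\log(p_iq_j)$ and $H(\xi,\eta)=-\sum_{i,j}p_{ij}\log p_{ij}$, so that their difference is $H(\xi)+H(\eta)-H(\xi,\eta)=\sum_{i,j}p_{ij}\log\frac{p_{ij}}{p_iq_j}$. This is exactly the relative entropy comparing the joint distribution $(p_{ij})$ with the product distribution $(p_iq_j)$; hence it is nonnegative, giving $H(\xi,\eta)\le H(\xi)+H(\eta)$. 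The equality case of Gibbs' inequality forces $p_{ij}=p_iq_j$ for all $i,j$, which is precisely independence of $\xi$ and $\eta$.

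The inequalities themselves are one-line consequences of concavity, so I do not expect any real difficulty there; the part demanding care is the \emph{equality analysis}. Concretely, I must verify that the condition $x=1$ in $\ln x\le x-1$ propagates correctly through the sums: for part (1) that equality forces all $np_i$ to share a common value, hence uniformity, and for part (2) that it forces $p_{ij}=p_iq_j$ on the whole index set, with due attention to indices where a marginal vanishes. Making these equality arguments airtight, rather than establishing the bounds, is where the attention should go.
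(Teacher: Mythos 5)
Your proof is correct: the paper itself gives no argument for this lemma, merely attributing it to Shannon and pointing to Ash's book, and the classical proof found there is exactly the Gibbs-inequality (relative-entropy) derivation you outline, including the equality analysis via the strict case of $\ln x\le x-1$. Your handling of the two equality clauses --- uniformity in part (1), factorization $p_{ij}=p_iq_j$ in part (2), with the observation that indices with a vanishing marginal contribute trivially --- is sound, so there is nothing to add.
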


\section{Entropy characteristics of prefix trees}\label{s:entropy}

Let $\cT=\cT(L)$ be a prefix tree. The entropy characteristics introduced below measure the expected uncertainty \emph{of a single letter} in a random word from $L$. By \emph{order-$n$ general entropy} $H_n(L)$ we mean the entropy of a random variable uniformly distributed on the set $|L\cap\Sigma^n|$ (or on the set of level-$n$ nodes of $\cT$), divided by $n$. By Lemma~\ref{l:shannon}(1), $H_n(L)=\frac{\log C_L(n)}{n}$. The fact that $L$ is factorial guarantees the existence of the limit
$$
H(L)=\lim_{n\to\infty} H_n(L)=\lim_{n\to\infty}\log (C_L(n))^{1/n}=\log \gr(L),
$$
which we call the \emph{general entropy} of $L$.

A different notion of entropy stems from consideration of random walks in $\cT$. As usual in graph theory, by \emph{random walk} we mean a stochastic process (Markov chain), the result of which is a finite or infinite walk in the given graph. The process starts in the initial state (either fixed or randomly chosen from some distribution) and runs step by step, guided by the following rule: visiting the node $u$, choose uniformly at random\footnote{Non-uniform distributions are also used in many applications but we do not consider them here.} an outgoing edge of $u$ and follow it to reach the next node. The walk stops if $u$ has no outgoing edges. Note that all walks in $\cT$ are directed paths; we refer to the walks starting at the root as \emph{standard}. %% \cite{} assumes undirected walks on trees
Let $\eta_n$ be the random variable with the range $|L\cap\Sigma^n|$ such that the probability of a word $w\in L$ is the probability that a random standard walk in $\cT$, reaching the level $n$, visits $w$. The \emph{order-$n$ Markov entropy} of $L$ is $\mu_n(L)=\frac{H(\eta_n)}{n}$. The following lemma is immediate from definitions and Lemma~\ref{l:shannon}(1). 

\begin{lemma} \label{l:mar1} For any factorial language $L$
and any $n$, one has $\mu_n(L)\le H_n(L)$.
\end{lemma}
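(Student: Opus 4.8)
The plan is to reduce the claim directly to Lemma~\ref{l:shannon}(1) by observing that both quantities in question are, up to the common normalizing factor $1/n$, entropies of random variables supported on the \emph{same} finite set: the level-$n$ nodes $L\cap\Sigma^n$ of $\cT$. Indeed, by definition and the observation recorded just above Lemma~\ref{l:mar1}, $H_n(L)$ is the entropy of the \emph{uniform} distribution on $L\cap\Sigma^n$ divided by $n$, whereas $\mu_n(L)=\frac{H(\eta_n)}{n}$ is the entropy of the distribution $\eta_n$ on the same ground set, divided by the same $n$; the only difference is that $\eta_n$ need not be uniform.

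First I would verify that the range of $\eta_n$ is contained in $L\cap\Sigma^n$, so that its cardinality is at most $C_L(n)$. This is immediate from the definition of a standard walk: a walk that reaches level $n$ is a directed path from the root visiting exactly one node at each level, in particular exactly one length-$n$ word of $L$; hence $\eta_n$ assigns positive probability only to words of $L\cap\Sigma^n$. Applying Lemma~\ref{l:shannon}(1) to $\eta_n$, whose range has at most $C_L(n)$ outcomes, then yields $H(\eta_n)\le\log C_L(n)$.

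Finally, dividing this inequality by $n$ and substituting $H_n(L)=\frac{\log C_L(n)}{n}$ gives
$$
\mu_n(L)=\frac{H(\eta_n)}{n}\le\frac{\log C_L(n)}{n}=H_n(L),
$$
which is the desired bound. I expect no genuine obstacle here: the only point requiring a moment's care is the bookkeeping that $\eta_n$ and the uniform variable share the same ground set $L\cap\Sigma^n$, after which the inequality is nothing more than the fact that the uniform distribution maximizes entropy. It is worth noting for the sequel that Lemma~\ref{l:shannon}(1) also records the equality case: $\mu_n(L)=H_n(L)$ exactly when $\eta_n$ is uniform, i.e.\ when a random walk reaches every level-$n$ node with the same probability. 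This already signals that the two characteristics differ as soon as the branching of $\cT$ is irregular, motivating the finer analysis in the remainder of Section~\ref{s:entropy}.
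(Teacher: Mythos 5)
Your proof is correct and is exactly the argument the paper intends: the paper states the lemma as ``immediate from definitions and Lemma~\ref{l:shannon}(1)'', and your write-up simply spells out that $\eta_n$ is supported on $L\cap\Sigma^n$, so $H(\eta_n)\le\log C_L(n)=nH_n(L)$ by the maximality of uniform entropy. No discrepancies with the paper's approach.
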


Similar to the case of the general entropy, the limit value exists:
\begin{lemma} \label{l:mar2} Let $L$ be a factorial language. Then 
there exists a limit $\mu(L)=\lim_{n\to\infty} \mu_n(L)=\inf_{n\in\mathbb{N}}\mu_n(L)$.
\end{lemma}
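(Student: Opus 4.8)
The plan is to reduce the statement to subadditivity and then apply Fekete's lemma, exactly as was done above for $C_L(n)$ and $\gr(L)$. Writing $a_n=n\mu_n(L)=H(\eta_n)$, it suffices to prove $a_{m+n}\le a_m+a_n$ for all $m,n$; then $(a_n)$ is subadditive and Fekete's lemma \cite{Fek23} yields $\lim_n a_n/n=\inf_n a_n/n$, i.e. $\mu(L)=\lim_n\mu_n(L)=\inf_n\mu_n(L)$. A preliminary technical point is that a walk may die at a dead end before reaching level $n$; since only nodes having descendants at level $n$ carry mass in $\eta_n$, and the right-extendable part $\re(L)$ is itself factorial, I would first reduce to the case $L=\re(L)$, where every standard walk is infinite and $\eta_n$ is the honest distribution $\eta_n(w)=\prod_{i=0}^{n-1}1/d(w[1..i])$ of its level-$n$ position, with $d(\cdot)$ denoting the number of children of a node.

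For the subadditivity itself I would use the chain rule for entropy. Split the length-$(m+n)$ standard walk word $W$ into its prefix $W[1..m]$ and the remaining $n$ letters. The prefix marginal is exactly the length-$m$ walk distribution, so its entropy is $a_m$; conditioned on the walk reaching a level-$m$ node $u$ (with walk probability $p(u)=\prod_{i=0}^{m-1}1/d(u[1..i])$), the continuation is a standard walk in the subtree $\cT_u$, whose depth-$n$ position has entropy $H(\eta_n^{(u)})$. This gives the exact identity
\[
a_{m+n}=a_m+\sum_{|u|=m}p(u)\,H(\eta_n^{(u)}).
\]
Hence subadditivity is equivalent to the \emph{averaged subtree inequality} $\sum_{|u|=m}p(u)\,H(\eta_n^{(u)})\le H(\eta_n)=a_n$, and the whole problem concentrates on this inequality.

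To establish it I would exploit $0$-subperiodicity. Because suffixes of elements of $L$ lie in $L$, each subtree $\cT_u$ is a subgraph of $\cT$ rooted at the root, and more quantitatively $d(uw)\le d(w)$ for every word $w$, since $uwc\in L$ forces $wc\in L$. This degree domination is the engine of the bound. The delicate feature — and the step I expect to be the main obstacle — is that the inequality is genuinely an aggregate statement and does \emph{not} hold termwise. Neither the per-subtree bound $H(\eta_n^{(u)})\le H(\eta_n)$ nor a per-step monotonicity of the conditional step-entropies $b_i=\mathbb{E}_{\eta_{i-1}}[\log d]$ is available: already for the golden-mean shift (binary words avoiding $11$) one computes $b_1=1,\ b_2=\tfrac12,\ b_3=\tfrac34$, so $(b_i)$ is not monotone and blocks cannot be compared step by step.

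The proof must therefore combine the degree domination $d(uw)\le d(w)$ across all $n$ steps with the precise walk weights $p(u)=\prod 1/d(\cdot)$. Concretely, using the chain rule one finds $\sum_{|u|=m}p(u)\,H(\eta_n^{(u)})=\sum_{j=1}^{n}b_{m+j}$, so the target becomes $\sum_{j=1}^{n}b_{m+j}\le\sum_{j=1}^{n}b_j$: the initial length-$n$ block of step-entropies must dominate every shifted block. I would attack this by pushing the level-$(m+j)$ walk measure forward along the map that deletes the length-$m$ prefix — this decreases each $\log d$ pointwise by the domination $d(uw)\le d(w)$ — and then controlling the resulting change of measure against $\eta_{j-1}$. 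Making this aggregation rigorous is where the real work lies; once the averaged subtree inequality is secured, the chain-rule identity gives subadditivity and Fekete's lemma finishes the proof.
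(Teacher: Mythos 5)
Your setup is sound and your diagnosis of the difficulty is sharp: the reduction to $\re(L)$, the chain-rule identity $a_{m+n}=a_m+\sum_{|u|=m}p(u)\,H(\eta_n^{(u)})$, the reformulation of subadditivity as $\sum_{j=1}^{n}b_{m+j}\le\sum_{j=1}^{n}b_j$, and the golden-mean computation showing that the step entropies $b_i$ are not monotone are all correct. But the proposal stops exactly where a proof is required: the ``averaged subtree inequality'' is announced as the place ``where the real work lies'' and is never established. The sketched attack --- push the level-$(m+j)$ walk measure forward along prefix deletion using $d(uw)\le d(w)$ and then ``control the change of measure'' --- is not an argument: the degree domination decreases the integrand pointwise, but it simultaneously changes the measure you integrate against, and you supply no mechanism for comparing $\mathbb{E}_{\eta_{m+j-1}}[\log d]$ with $\mathbb{E}_{\eta_{j-1}}[\log d]$ after the pushforward, nor for why the \emph{sums} over a block of $n$ steps should compare when the individual terms do not. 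As it stands, this is a correct reduction plus an unproven key lemma, i.e., not a proof.

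For comparison, the paper never descends to per-step entropies. It views the length-$(n{+}m)$ random word as a pair consisting of a length-$n$ walk and a length-$m$ walk and applies Lemma~\ref{l:shannon}(2), $H(\xi,\eta)\le H(\xi)+H(\eta)$, to conclude $H(\eta_{n+m})\le H(\eta_n)+H(\eta_m)$ in one line; Fekete's lemma then finishes. The structural difference is that your chain rule produces the exact conditional-entropy decomposition, which forces you to prove the sharper subtree-averaged bound, whereas the paper only needs the marginal (unconditional) bound coming from subadditivity of joint entropy. If you want to complete your write-up along the shortest path, replace the chain rule by this subadditivity and then justify why the entropies of the two marginal halves are bounded by $H(\eta_n)$ and $H(\eta_m)$ (in $\re(L)$ the prefix marginal is exactly $\eta_n$, so only the suffix half needs an argument). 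In either form, the decisive inequality is currently missing from your proposal.
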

\begin{proof} Consider a random walk/word of length $n{+}m$ as a ``vector'' consisting of two random walks/words with lengths $n$ and $m$ respectively. Then $H(\eta_{n+m})\le H(\eta_n)+H(\eta_m)$ by Lemma~\ref{l:shannon}(2). Hence the sequence $\{H(\eta_n)\}$ is subadditive as a function of $n$, and Fekete's lemma \cite{Fek23} guarantees the existence of the limit $\lim_{n\to\infty} \frac{H(\eta_n)}{n}$ and its equality to the infimum, as required.\qed
\end{proof}
%Earlier, this parameter was considered for a subclass of regular languages in \cite{CMRS99}. We study it in Section~\ref{s:computeentropy}.
% in the context of text compression with antidictionaries. Below we give more details concerning the case of regular languages.

We call $\mu(L)$ the \emph{Markov entropy} of $L$. We want to estimate $\mu(L)$ for different languages; so our first goal is to relate $H(\eta_n)$, and thus $\mu_n(L)$, to the parameters of the tree $\cT$. Let $\ch(w)$ denote the number of children of the node $w$ in $\cT$ and $P(w)$ be the probability of visiting the word $w$ by a random standard walk. 

\begin{lemma} \label{l:wordprob} $P(w)=\Big(\prod_{i=0}^{|w|-1} \ch(w[1..i]) \Big)^{-1}$.
\end{lemma}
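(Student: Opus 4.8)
The plan is to compute $P(w)$ directly from the definition of a random standard walk as a product of the per-step transition probabilities along the unique root-to-$w$ path in $\cT$. The key observation is that a standard walk visits $w=w[1..n]$ if and only if it visits each prefix $w[1..i]$ for $i=0,1,\ldots,n$ and at each such prefix chooses the outgoing edge labeled $w[i{+}1]$.

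First I would write the event ``the walk visits $w$'' as the intersection of the $n$ nested events $A_i$ = ``the walk reaches $w[1..i]$ and then follows the edge to $w[1..i{+}1]$'', for $i=0,\ldots,n{-}1$. Since the walk is a Markov chain whose transitions out of any node are chosen uniformly at random among its children, the conditional probability of following a \emph{specific} outgoing edge at the node $w[1..i]$, given that the walk has reached $w[1..i]$, is exactly $\frac{1}{\ch(w[1..i])}$. By the chain rule for Markov chains, $P(w)$ factors as the product of these conditional probabilities, yielding
$$
P(w)=\prod_{i=0}^{|w|-1} \frac{1}{\ch(w[1..i])}=\Big(\prod_{i=0}^{|w|-1} \ch(w[1..i]) \Big)^{-1}.
$$
Formally this is a short induction on $|w|$: the base case $w=\lambda$ gives the empty product equal to $1$ (the walk always visits the root), and the inductive step multiplies the probability of reaching $w[1..|w|{-}1]$ by the uniform choice $\frac{1}{\ch(w[1..|w|{-}1])}$ of the last edge.

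I do not expect any serious obstacle here, as the statement is essentially unwinding the definitions. The only point requiring a little care is justifying that the events at successive nodes are governed by independent uniform choices, so that the conditional probabilities genuinely multiply; this is exactly the Markov property of the walk together with the stipulated uniform transition rule, and it is what lets the chain rule collapse into a clean product. One should also note that every node $w[1..i]$ on the path has $\ch(w[1..i])\ge 1$ (indeed the child $w[1..i{+}1]$ exists since $w\in L$ and $L$ is factorial), so every factor in the product is well defined and positive.
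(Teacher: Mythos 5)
Your proof is correct and takes essentially the same approach as the paper: the paper's one-line argument ("all prefixes of $w$ should be visited and the right choice should be done at each step") is exactly the chain-rule decomposition you spell out in detail.
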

\begin{proof} 
All prefixes of $w$ should be visited and the right choice should be done at each step.\qed
\end{proof}

In general, $P(w)$ may underestimate the probability assigned to $w$ by $\eta_{|w|}$; this is the case if some prefix of $w$ has a child which generates a finite subtree with no nodes of level $|w|$. To remedy this, we consider trimming of prefix trees. By \emph{$n$-trimmed version} of $\cT$, denoted by $\cT_{[n]}$, we mean the tree obtained from $\cT$ by deletion of all finite subtrees $\cT_u$ which have no nodes of level $n$ (and thus of bigger levels). In other words, a node $w\in L$ is deleted iff $L$ contains no length-$n$ word with the prefix $w$. 

\begin{example}
Let $L=\PF(2,3)$, $\cT=\cT(L)$. If $n\ge9$, then $\cT_{[n]}$ does not contain $u=00100100$ because $u0,u1$ end with cubes; if $n\ge 15$, then  $\cT_{[n]}$ does not contain $v=0100101001010$, because $v1,v00$, and $v01$ end with cubes.
\end{example}

The \emph{trimmed version} of $\cT$, denoted by $\cT_{[]}$, is obtained from $\cT$ by deletion of \emph{all} finite subtrees. The next lemma follows from definitions.
\begin{lemma} \label{l:trim}
(1) $\cT_{[]}=\bigcap_{n\in\mathbb{N}} \cT_{[n]}$. (2) $\cT_{[]}$ is the prefix tree of $\re(L)$.
\end{lemma}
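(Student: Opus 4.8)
The plan is to reduce both parts to a single pointwise criterion describing which nodes of $\cT$ survive each trimming operation, and then to read off the two claimed identities. For a node $w$ of $\cT$, let me write $M(w)$ for the supremum of the levels of the nodes of the subtree $\cT_w$ (so $M(w)\ge|w|$ always, and $M(w)=\infty$ exactly when $\cT_w$ is infinite). The central observation I would establish first is
\[
w\in\cT_{[n]}\iff M(w)\ge n, \qquad w\in\cT_{[]}\iff M(w)=\infty.
\]

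First I would prove the $\cT_{[n]}$ criterion. A node $w$ is removed from $\cT_{[n]}$ precisely when it lies in some deleted subtree, i.e.\ when some ancestor $u$ of $w$ (possibly $w$ itself) has $\cT_u$ finite with no node of level $n$. Since the subtrees along a root-to-$w$ path are nested, $\cT_w\subseteq\cT_u$ and hence $M(w)\le M(u)$; so the deletion condition holds for some ancestor iff it already holds for $u=w$, reducing to ``$\cT_w$ is finite and has no level-$n$ node''. Here I would use the defining property of a tree, that levels increase by exactly one along each edge: if $\cT_w$ contains a node of level $\ge n$ (with $n\ge|w|$), that node has an ancestor of every intermediate level, in particular one of level exactly $n$. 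Consequently ``$\cT_w$ has no node of level $n$'' is equivalent to ``$M(w)<n$'', which yields the criterion. The analogous but simpler argument, deleting all finite subtrees rather than only those missing level $n$, gives $w\in\cT_{[]}\iff M(w)=\infty$.

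Part~(1) is then immediate: the trees $\cT_{[n]}$ are nested decreasingly (as $M(w)\ge n{+}1$ implies $M(w)\ge n$), and $w$ lies in every $\cT_{[n]}$ iff $M(w)\ge n$ for all $n$, i.e.\ iff $M(w)=\infty$, i.e.\ iff $w\in\cT_{[]}$; the edge sets match because in each case an edge survives exactly when both its endpoints do. For part~(2) I would translate $M(w)=\infty$ into the language of right extendability. By definition $w\in\re(L)$ means $L$ has infinitely many words with prefix $w$, and these are exactly the nodes of $\cT_w$; so $w\in\re(L)\iff\cT_w$ is infinite $\iff M(w)=\infty\iff w\in\cT_{[]}$, giving equality of node sets, and again the edges agree. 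To justify calling the result ``the prefix tree of $\re(L)$'', I would check that $\re(L)$ is factorial: if $w=pvq\in\re(L)$, then taking the factor that starts at the occurrence of $v$ inside the unboundedly long words of $L$ with prefix $w$ produces unboundedly long words of $L$ with prefix $v$, whence $v\in\re(L)$.

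I expect the only real subtlety, and hence the main obstacle, to be this first step: carefully matching the ``delete finite subtrees with no level-$n$ node'' formulation with the per-node condition $M(w)\ge n$, in particular the observation that any subtree reaching level $n$ actually meets level $n$, which is exactly what makes the two descriptions of $\cT_{[n]}$ coincide. Everything after that is bookkeeping about nested trees and inherited edges.
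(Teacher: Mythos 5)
Your proof is correct, and it takes the only route available: the paper offers no argument for this lemma beyond the remark that it ``follows from definitions,'' so your pointwise criterion $w\in\cT_{[n]}\iff M(w)\ge n$ and $w\in\cT_{[]}\iff M(w)=\infty$ is precisely the unpacking of those definitions that the authors leave implicit. The details you flag as the main subtlety (a subtree reaching level $\ge n$ must meet level $n$ exactly, and $\cT_w$ infinite $\iff M(w)=\infty$ by finite branching) are exactly the right ones to check, and your verification that $\re(L)$ is factorial properly justifies speaking of its prefix tree.
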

We write $\ch_{[n]}(w)$ ($\ch_{[]}(w)$) for the number of children of $w$ in $\cT_{[n]}$ (resp., $\cT_{[]}$) and $P_{[n]}(w)$ ($P_{[]}(w)$) for the probability of visiting $w$ by a random standard walk in $\cT_{[n]}$ (resp., $\cT_{[]}$). As in Lemma~\ref{l:wordprob}, one has
\begin{equation} \label{e:Pn}
    P_{[n]}(w)=\Big(\prod_{i=0}^{|w|-1} \ch_{[n]}(w[1..i]) \Big)^{-1} \text{ and \ } P_{[]}(w)=\Big(\prod_{i=0}^{|w|-1} \ch_{[]}(w[1..i]) \Big)^{-1} .
\end{equation}

\begin{lemma} \label{l:probtrim}
Let $w\in L$, $|w|=n$. Then $\eta_n$ assigns to $w$ the probability $P_{[n]}(w)$.
\end{lemma}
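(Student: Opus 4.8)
The plan is to identify the walk that defines $\eta_n$ with the free random walk in the trimmed tree $\cT_{[n]}$, and then to read off the answer from \eqref{e:Pn}. Since $|w|=n$, every prefix $w[1..i]$ with $0\le i\le n$ is an ancestor of the level-$n$ node $w$, so each $w[1..i]$ has a descendant at level $n$ and is therefore retained under $n$-trimming; in particular all prefixes of $w$, and $w$ itself, are nodes of $\cT_{[n]}$. A standard walk visits $w$ exactly when it passes in turn through $w[1..0]=\lambda,\,w[1..1],\ldots,w[1..n]=w$, so $\eta_n(w)$ factors as the product over $i=0,\ldots,n{-}1$ of the probability of selecting the child $w[1..i{+}1]$ while standing at $w[1..i]$.

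First I would count the children that are actually available to a walk that reaches level $n$. Such a walk never steps into a subtree $\cT_v$ containing no level-$n$ node, so at $w[1..i]$ it chooses uniformly among the children $v$ from which level $n$ is still reachable. By the definition of the $n$-trimmed tree these are precisely the children of $w[1..i]$ in $\cT_{[n]}$, of which there are $\ch_{[n]}(w[1..i])$, and the intended child $w[1..i{+}1]$ is one of them. Hence each factor equals $1/\ch_{[n]}(w[1..i])$; since the choices at distinct nodes are independent, $\eta_n(w)=\prod_{i=0}^{n-1}\ch_{[n]}(w[1..i])^{-1}=P_{[n]}(w)$ by \eqref{e:Pn}.

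The step I expect to be the main obstacle is pinning down the interpretation of ``a walk reaching level $n$'' as this look-ahead process, in which a child whose subtree dies out before level $n$ is discarded locally so that the per-step normalisation is by $\ch_{[n]}$ rather than by $\ch$. This is exactly the source of the gap between $P(w)$ and $\eta_n(w)$ noted before the lemma: removing such a dead child lowers the branching count from $\ch(w[1..i])$ to $\ch_{[n]}(w[1..i])$ and thereby shifts probability mass onto $w$. Once the identification ``live children in $\cT$ $=$ children in $\cT_{[n]}$'' is made precise and one checks that every ancestor of $w$ survives trimming, the product formula \eqref{e:Pn} closes the argument at once.
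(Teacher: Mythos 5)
Your proof is correct and takes essentially the same route as the paper: both arguments identify the walk defining $\eta_n$ with the uniform random walk in the trimmed tree $\cT_{[n]}$ (where every length-$n$ walk reaches level $n$, so no conditioning is needed) and then read the product formula off \eqref{e:Pn}. The interpretive point you flag as the main obstacle --- that ``reaching level $n$'' must mean discarding dead children locally and renormalising each step by $\ch_{[n]}$, rather than conditioning globally on the event of reaching level $n$ (which would give a different, non-product distribution) --- is indeed the crux, and the paper resolves it exactly as you do.
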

\begin{proof}
In $\cT_{[n]}$, one can perform $n$ steps of a random walk and reach the level $n$ no matter which random choices were made. Hence  the probability assigned to $w$ by $\eta_n$ equals $P_{[n]}(w)$ by definition.\qed
\end{proof}

Lemma~\ref{l:probtrim} and the definition of entropy imply
\begin{equation} \label{e:entr}
    H(\eta_n)=-\sum_{w\in L\cap\Sigma^n} P_{[n]}(w)\log P_{[n]}(w)
\end{equation}

Given an arbitrary tree $\cT$, we assign to each internal node $u$ its \emph{weight}, equal to the logarithm of the number of children of $u$. \emph{Branching frequency} of standard walk ending at a node $w$, denoted by $\Bf(\cT,w)$, is the sum of weights of all nodes in the walk, except for $w$, divided by the length of the walk (= level of $w$). The use of branching frequency for prefix trees can be demonstrated as follows. For a language $L$, a natural problem is to design a method for compact representation of an arbitrary word $w\in L$. A possible solution is to encode the standard walk in $\cT=\cT(L)$, ending at $w$. We take $|w|$-trimmed version of $\cT$ and encode consecutively all choices of edges needed to reach $w$. For each predecessor $u$ of $w$ we encode the correct choice among $\ch_{[|w|]}(u)$ outgoing edges. The existence of asymptotically optimal entropy coders, like the arithmetic coder \cite{Ris76}, allows us to count $\log \ch_{[|w|]}(u)$ bits for encoding this choice. Thus $w$ will be encoded by $\sum_{i=0}^{|w|-1} \log \big(\ch_{[|w|]}(w[1..i])\big)$ bits, which is exactly $\Bf(\cT_{[|w|]},w)$ bits per symbol.  

\begin{remark} \label{r:binary}
The proposed method of coding generalizes the antidictionary compression method \cite{CMRS99} for arbitrary alphabets. Antidictionary compression works as follows: given $w\in L\subseteq\{0,1\}^*$, examine each prefix $w[1..i]$; if it is the only child of $w[1..i{-}1]$ in the prefix tree of $L$, delete $w[i]$. In this way, the remaining bits encode the choices made in branching points during the standard walk to $w$. The compression ratio is the fraction of branching points among the predecessors of $w$: any branching point contributes 1 to the length of the code, other nodes in the walk contribute nothing.
\end{remark}

The following theorem relates branching frequencies to Markov entropy.
\begin{theorem} \label{t:bf}
For a factorial language $L$ and a positive integer $n$, the order-$n$ Markov entropy of $L$ equals the expected branching frequency of a length-$n$ random walk in the prefix tree $\cT(L)$.
\end{theorem}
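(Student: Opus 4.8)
The plan is to observe that both sides of the asserted equality, once written out, are the same average of the quantity $-\log P_{[n]}(w)$ over the level-$n$ nodes $w$, the point being that $-\log P_{[n]}(w)$ is exactly $n$ times the branching frequency of the walk ending at $w$. Concretely, I would establish the identity
\[
-\log P_{[n]}(w) \;=\; \sum_{i=0}^{n-1}\log\ch_{[n]}(w[1..i]) \;=\; n\cdot\Bf(\cT_{[n]},w),
\]
and then feed it into the entropy formula \eqref{e:entr}.

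First I would pin down the random walk and its distribution. By the definition of $\eta_n$ together with Lemma~\ref{l:probtrim}, a length-$n$ random walk that reaches level $n$ is a walk in the $n$-trimmed tree $\cT_{[n]}$, and it visits a given $w\in L\cap\Sigma^n$ with probability $P_{[n]}(w)$; thus the endpoint of the walk is distributed according to $\eta_n$. The branching frequency of the walk ending at $w$ is then $\Bf(\cT_{[n]},w)=\frac1n\sum_{i=0}^{n-1}\log\ch_{[n]}(w[1..i])$ by the definition of branching frequency, since the $n$ weighted nodes on the walk are precisely the proper prefixes $w[1..0],\ldots,w[1..n-1]$ of $w$.

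Next I would take the logarithm of the product formula \eqref{e:Pn}, which immediately yields the displayed identity $-\log P_{[n]}(w)=n\cdot\Bf(\cT_{[n]},w)$. Substituting it into \eqref{e:entr} and dividing by $n$ gives
\[
\mu_n(L)=\frac{H(\eta_n)}{n}=\frac1n\sum_{w\in L\cap\Sigma^n}P_{[n]}(w)\bigl(-\log P_{[n]}(w)\bigr)=\sum_{w\in L\cap\Sigma^n}P_{[n]}(w)\,\Bf(\cT_{[n]},w),
\]
and the right-hand side is precisely the expectation of the branching frequency over a length-$n$ random walk, as required.

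I do not expect a genuine obstacle here: the argument is a direct unwinding of the definitions of $\eta_n$, of $P_{[n]}$, and of branching frequency. The single point demanding care is the consistent use of the trimmed tree. The walks defining $\mu_n(L)$ reach level $n$ with certainty and hence effectively live in $\cT_{[n]}$, so the child counts entering the branching frequency must be the trimmed counts $\ch_{[n]}$ rather than the raw counts $\ch$; employing $\cT_{[n]}$ in both the probabilities and the weights is exactly what makes the two substitutions match.
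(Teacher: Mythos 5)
Your proposal is correct and is essentially the paper's own proof run in the reverse direction: the paper starts from the expected branching frequency, expands $\Bf(\cT_{[n]},w)$ via its definition, and collapses the inner sum to $-\log P_{[n]}(w)$ using \eqref{e:Pn} and \eqref{e:entr}, which is exactly your identity $-\log P_{[n]}(w)=n\cdot\Bf(\cT_{[n]},w)$ substituted the other way. The same ingredients (Lemma~\ref{l:probtrim}, Eq.~\eqref{e:Pn}, Eq.~\eqref{e:entr}) appear in both, including your correct emphasis on working in the trimmed tree $\cT_{[n]}$ throughout.
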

\begin{proof}
Let $E_n$ denote the expected branching frequency of a length-$n$ random walk in $\cT(L)$. 
One has
\begin{multline*}
   E_n=[\text{definition of expectation}] =\!\!\sum_{w\in L\cap\Sigma^n}\!\! P_{[n]}(w)\Bf(\cT_{[n]},w) \\ =[\text{definition of  }\Bf] 
   = \!\!\sum_{w\in L\cap\Sigma^n}\!\! P_{[n]}(w)\Big(\sum_{i=0}^{|w|-1} \log \big(\ch_{[n]}(w[1..i])\big)\Big)/n \\
   = [\text{Eq. \eqref{e:Pn}}] = \!\!\sum_{w\in L\cap\Sigma^n}\!\! P_{[n]}(w)\big(- \log P_{[n]}(w)\big)/n 
   = [\text{Eq. \eqref{e:entr}}] = \frac{H(\eta_n)}{n} = \mu_n(L).\ \qed
\end{multline*}

\end{proof}

\section{Computing Entropy} \label{s:computeentropy}
\subsection{General and Markov entropy for regular languages} \label{ss:reg}

Let $L$ be a factorial regular language, $\cA$ be a PDFA, recognizing $L$. The problem of finding $\gr(L)$, and thus $H(L)$, was solved by means of matrix theory. Let us recall the main steps of this solution. The Perron--Frobenius theorem says that the maximum absolute value of an eigenvalue of a non-negative matrix $M$ is itself an eigenvalue, called \emph{principal} eigenvalue. A folklore theorem (see Theorem~2 in the survey \cite{Sh12rev}) says that $\gr(L)$ equals the principal eigenvalue of the adjacency matrix of $\cA$. This eigenvalue can be approximated\footnote{Note that it is not possible in general to find the roots of polynomials exactly.} with any absolute error $\delta$ in $O(|\cA|/\delta)$ time \cite[Th.\,5]{Sh08csr}; see also \cite[Sect.\,3.2.1]{Sh12rev}.

Now consider the computation of $\mu(L)$. By Lemma~\ref{l:mar2} and Theorem~\ref{t:bf}, $\mu(L)$ is the limit of expected branching frequencies of length-$n$ random standard walks in the prefix tree $\cT=\cT(L)$. Standard walks in $\cT$ are in one-to-one correspondence with accepting walks in $\cA$, so we can associate each node $w\in \cT$ with the state $\lambda.w\in \cA$ and consider random walks in $\cT$ as random walks in $\cA$. We write $\deg^\to(u)$ for the out-degree of the node $u$ in $\cA$.

We need the apparatus of \emph{finite-state Markov chains}. Such a Markov chain with $m$ states is defined by an arbitrary row-stochastic $m\times m$ matrix $A$ (row-stochastic means that all entries are nonnegative and all row sums equal 1). The value $A[i,j]$ is treated as the probability that the next state of the chain will be $j$ given that the current state is $i$. Any finite directed graph $G$ with no nodes of out-degree 0 represents a finite-state Markov chain; the stochastic matrix of $G$ is built as follows: take the adjacency matrix and divide each value by the row sum of its row (see Fig.~\ref{f:2nodes} below). 

Recall some results on finite-state Markov chains (see, e.g., \cite[Vol.\,2, Ch.\,3]{Gan60}). Let $A$ be the $m\times m$ matrix of the chain. The process is characterized by the vectors $\vec p^{(n)}=(p_1^{(n)},\ldots, p_m^{(n)})$, where $p_i^{(n)}$ is the probability of being in state $i$ after $n$ steps; the initial distribution $\vec p^{(0)}$ is given as a part of description of the chain. \emph{Stationary distribution} of $A$ is a vector $\vec p=(p_1,\ldots,p_m)$ such that $p_i\ge0$ for all $i$, $\sum_{i=1}^m p_i=1$ and $\vec pA=\vec p$. Every row-stochastic matrix has one or more stationary distributions; such a distribution is unique for the matrices obtained from strongly connected digraphs. The sequence $\{\vec p^{(n)}\}$ approaches some stationary distribution $\vec p$ in the following sense:
\begin{itemize}
    \item[$(*)$] there exists an integer $h\ge1$ such that $\vec p=\lim_{n\to\infty} \frac{\vec p^{(n)}+\vec p^{(n+1)}+\ldots+\vec p^{(n+h-1)}}{h}$
\end{itemize} 
(That is, the limit of the process is a length-$h$ cycle and $\vec p$ gives the average probabilities of states in this cycle. In practical cases usually $h=1$ and thus $\vec p = \lim_{n\to\infty} \vec p^{(n)}$.)

\begin{theorem} \label{t:reg}
Let $L$ be a factorial regular language, $\hat{\cA}$ be a PDFA accepting $\re(L)$. Suppose that $\hat{\cA}$ has $m$ states $1,\ldots,m$ and $\vec p=(p_1,\ldots,p_m)$ is the stationary distribution for a random walk in $\hat{\cA}$, starting at the initial state. Then $\mu(L)=\sum_{i=1}^m p_i\log(\deg^\to(i))$.
\end{theorem}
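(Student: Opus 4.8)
The plan is to chain together the three results already proved about walks with the Markov-chain machinery of the automaton. By Lemma~\ref{l:mar2} we have $\mu(L)=\lim_{n\to\infty}\mu_n(L)$, and by Theorem~\ref{t:bf} each $\mu_n(L)$ is the expected branching frequency of a length-$n$ random walk in $\cT(L)$, which is computed in the $n$-trimmed tree $\cT_{[n]}$ (via $\ch_{[n]}$ and $P_{[n]}$). The bridge to $\hat\cA$ is Lemma~\ref{l:trim}(2): the fully trimmed tree $\cT_{[]}$ is the prefix tree of $\re(L)$, so reading a node $w\in\cT_{[]}$ in $\hat\cA$ lands in the state $\lambda.w$, and the children of $w$ in $\cT_{[]}$ are in bijection (through their labels) with the out-edges of that state; hence $\ch_{[]}(w)=\deg^\to(\lambda.w)$, and a standard random walk in $\cT_{[]}$ is exactly a random walk in $\hat\cA$ from the initial state. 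I would therefore split the argument in two: (A) show that using $\cT_{[]}$ instead of $\cT_{[n]}$ perturbs the expected branching frequency by only $O(1/n)$, so that $\mu(L)$ equals the limit of the expected branching frequency $E_n^{[]}$ of length-$n$ walks in $\cT_{[]}$; and (B) evaluate $\lim_n E_n^{[]}$ via the stationary distribution.

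For part (B), let $\vec p^{(i)}=(p_1^{(i)},\dots,p_m^{(i)})$ be the state distribution of the $\hat\cA$-walk after $i$ steps, with $\vec p^{(0)}$ concentrated on the initial state. Since the weight $\log\ch_{[]}(w)=\log\deg^\to(\lambda.w)$ of a tree node depends only on its automaton state, the definition of $\Bf$ and linearity of expectation give
\[
E_n^{[]}=\frac1n\sum_{i=0}^{n-1}\mathbb E\big[\log\ch_{[]}(W[1..i])\big]=\sum_{j=1}^m\Big(\frac1n\sum_{i=0}^{n-1}p_j^{(i)}\Big)\log\deg^\to(j).
\]
The inner Cesàro averages converge to $\vec p$: property $(*)$ states that the length-$h$ block averages of $\{\vec p^{(i)}\}$ tend to $\vec p$, and regrouping the $i$-sum into consecutive blocks of length $h$ (with the finitely many leftover initial terms contributing $O(1/n)$) yields $\frac1n\sum_{i=0}^{n-1}p_j^{(i)}\to p_j$. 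As the outer sum over $j$ is finite, exchanging limit and sum gives $\lim_n E_n^{[]}=\sum_{j=1}^m p_j\log\deg^\to(j)$, the claimed value.

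Part (A) is the main obstacle, since $\mu_n(L)$ is governed by the $n$-dependent tree $\cT_{[n]}$, to which the fixed-automaton computation does not apply directly. The key structural fact for regular $L$ is that dead subtrees have bounded depth: if $\cT_w$ is finite, then the state $\lambda.w$ in a PDFA for $L$ accepts only finitely many words, so no path leaves that state with length exceeding the number of states $D$, whence $\cT_w$ has depth below $D$. Consequently a child $wa$ with finite $\cT_{wa}$ can reach level $n$ only when $|w|\ge n-D$, so for every node $w$ with $|w|\le n-D-1$ the children surviving in $\cT_{[n]}$ are precisely the right-extendable ones; thus $\ch_{[n]}(w)=\ch_{[]}(w)$ and the two child sets coincide. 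A length-$n$ walk in $\cT_{[n]}$ and an infinite walk in $\cT_{[]}$ can therefore be coupled to agree up to level $n-D-1$, so their branching frequencies differ only in the last $D$ terms; as each weight is at most $\log\sigma$, this forces $|\mu_n(L)-E_n^{[]}|\le D\log\sigma/n$. Letting $n\to\infty$ and combining with part (B) gives $\mu(L)=\lim_n\mu_n(L)=\lim_n E_n^{[]}=\sum_{j=1}^m p_j\log\deg^\to(j)$, as required. (The degenerate case of finite $L$, where $\re(L)=\varnothing$, is vacuous and excluded.)
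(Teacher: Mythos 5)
Your proof is correct and follows essentially the same route as the paper: first evaluate the entropy of $\re(L)$ as $\sum_i p_i\log(\deg^\to(i))$ via Ces\`aro convergence of the state distributions of the $\hat\cA$-walk to $\vec p$ (the paper's $\hat\mu(L)$ computation, via $N(i,n)\to np_i$), then argue that replacing the $n$-trimmed tree by the fully trimmed one perturbs the expected branching frequency by only $O(1/n)$. The lone difference is cosmetic: the paper bounds the error by the number $c$ of states of $\cA$ from which no cycle is reachable (a walk spends at most its last $c$ steps among them), whereas you bound the depth of finite subtrees by the number of states $D$ and couple the two walks up to level $n-D-1$; your variant has the small merit of making explicit why $\ch_{[n]}$ may be replaced by out-degrees in $\hat\cA$ on all but the last few steps, a point the paper leaves implicit.
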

\begin{proof}
In $\hat{\cA}$, all nodes have outgoing edges, so $\hat{\cA}$ defines a finite-state Markov chain. As above, $p_i^{(n)}$ denotes the probability to be in the state $i$ after $n$ steps of a random walk. Let 1 be the initial state; then $p_1^{(0)}=1$ and $p_i^{(0)}=0$ for $i=2,\ldots,m$ by conditions of the theorem. By linearity of expectation, the expected number of visits to state $i$ before the $n$th step is $N(i,n)=\sum_{j=0}^{n-1} p_i^{(j)}$. By $(*)$, $\lim_{n\to\infty} \frac{p_i^{(0)}+\cdots+p_i^{(n-1)}}{n}=p_i$ and thus $N(i,n)\xrightarrow{n\to\infty} np_i$. 

Let $\hat\mu(L)$ be the Markov entropy of $\re(L)$. Consider a word $w\in\re(L)$ and the corresponding standard walk in the prefix tree $\hat\cT=\cT(\re(L))$. In $\hat\cT$, all walks can be infinitely extended: $\ch(u)=\ch_{[n]}(u)$ for all $u,n$. Hence $\Bf(\hat\cT,w)=\frac 1n\sum_{i=0}^{|w|-1} \log \big(\ch(w[1..i])\big)$. Further, $\ch(u)$ equals the out-degree of the state $\lambda.u$ of $\hat{\cA}$. Hence, the expected branching frequency $\hat E_n$ of a length-$n$ random standard walk in $\hat\cT$ satisfies
\begin{equation} \label{e:En}
\hat E_n= \frac 1n \sum_{i=1}^m N(i,n)\log(\deg^\to(i))    
\end{equation}
Taking limits as $n$ approaches infinity and substituting $np_i$ for $N(i,n)$, one gets
\begin{equation} \label{e:hatmu}
\hat\mu(L) = \frac 1n \sum_{i=1}^m np_i\log(\deg^\to(i)) = \sum_{i=1}^m p_i\log(\deg^\to(i)).
\end{equation}
It remains to show that $\mu(L)=\hat\mu(L)$. Let $L$ be accepted by a PDFA $\cA$. For $w\in L$, one trivially has
\begin{itemize}
    \item[$(\diamond)$] $w\in\re(L)$ iff some cycle of $\cA$ can be reached from the state $\lambda.w$
\end{itemize}
By $(\diamond)$, the PDFA obtained from $\cA$ by deleting all states from which no cycle is reachable accepts $\re(L)$. So we may refer to the PDFA obtained from $\cA$ by such ``trimming'' as $\hat\cA$. The number of deleted states is some constant $c$. Consider an accepting walk in $\cA$; let $w$ be its label. The first $|w|-c$ steps of this walk are made inside $\hat \cA$. Then we use \eqref{e:En} to estimate the expected branching frequency $E_n$ of a length-$n$ random standard walk in the tree $\cT=\cT(L)$:
$$
\frac 1n \sum_{i=1}^m N(i,n{-}c)\log(\deg^\to(i))\le E_n\le \frac 1n \big(c\log\sigma+ \sum_{i=1}^m N(i,n{-}c)\log(\deg^\to(i))\big).
$$
The lower (resp., upper) bound is obtained by replacing the sum of weights of the last $c$ steps of the walk with the lower bound 0 (resp., upper bound $c\log\sigma)$. As $n$ tends to infinity, both the lower and the upper bound approach the expression from the right-hand side of \eqref{e:hatmu}. Hence $\mu(L)=\hat\mu(L)$; the theorem is proved.\qed
\end{proof}

\begin{example} \label{ex:2nodes}
Let $L\subset\{0,1\}^*$ be the regular language consisting of all words having no factor 11. Its accepting PDFA, the corresponding matrices and computations are presented in Fig.~\ref{f:2nodes}. Note that $\re(L)=L$.
\begin{figure}[!htb]
    \begin{minipage}{3.8cm}
    \includegraphics[trim = 43 762 450 31, clip]{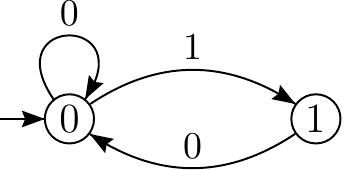}
    \end{minipage}
    \begin{minipage}{4.2cm}
    Adjacency matrix: $\begin{bmatrix} 1&1\\ 1&0\end{bmatrix}$\\
    Characteristic polynomial:\\
    $x^2-x-1$\\
    $\gr(L)=\frac{\sqrt{5}+1}{2}\approx 1.618\ldots$\\
    $H(L)=\log (\gr(L)){\approx} 0.694\ldots$
    \end{minipage}
    \begin{minipage}{4.0cm}
    Stochastic matrix: $\begin{bmatrix} \frac 12&\frac 12\\ 1&0\end{bmatrix}$\\
    Row eigenvector: $\vec p=(\frac 23, \frac 13)$\\
    $\mu(L)=\frac 23\log 2 = \frac 23$
    \end{minipage}
    
    \caption{Accepting PDFA and entropy computations for the language $L$ (Example~\ref{ex:2nodes}).}
    \label{f:2nodes}
\end{figure}
\vspace*{-3mm}
\end{example}
\paragraph{Computational aspects.} Computing $\hat\cA$ from $\cA$ takes $O(|\cA|)$ time, as it is sufficient to split $\cA$ into strongly connected components and traverse the acyclic graph of components. The vector $\vec p$ can be computed by solving the linear system $\vec p(\hat A-I)=\vec 0$, where $\hat A$ is the adjacency matrix of $\hat\cA$ and $I$ is the identity matrix. This solution requires $\Theta(m^3)$ time and $\Theta(m^2)$ space, which is forbidding for big automata. More problems arise if the solution is not unique; but the correct vector $\vec p$ still can be found by means of matrix theory (see, e.g., \cite[Ch.\,11]{GrSn97}). In order to process big automata (say, with millions of states), one  can iteratively use the equality $\vec p^{(n+1)}=\vec p^{(n)}\!\hat A$ to approximate $\vec p$ with the desired precision. Each iteration can be performed in $O(m)$ time, because $\hat A$ has $O(m)$ nonzero entries. One can prove, similar to \cite[Th.\,3.1]{Sh10tc}, that under certain natural restrictions $O(\delta^{-1})$ iterations is sufficient to obtain $\vec p$ within the approximation error $\delta$.

\subsection{Order-$n$ Markov entropy via random walks} \label{ss:walks}

Let $L\subseteq\Sigma^*$ be an arbitrary infinite factorial language such that the predicate $\cL(w)$, which is true if $w\in L$ and false otherwise, is computable. There is little hope to compute $\mu(L)$, but one can use an oracle computing $\cL(w)$ to build random walks in the prefix tree $\cT=\cT(L)$ and obtain statistical estimates of $\mu_n(L)$ for big $n$. We construct random walks by random depth-first search (Algorithm~\ref{alg:dfs}), executing the call $\dfs(\lambda,n)$. The algorithm stops immediately when level $n$ is reached. When visiting node $u$, the algorithm chooses a non-visited child of $u$ uniformly at random and visits it next. If all children of $u$ are already visited, then $u$ is a ``dead end'' (has no descendants at level $n$), and the search returns to the parent of $u$.
\begin{algorithm*}[!htb]
%\vspace*{-3mm}
\caption{Random walk in $\cT(L)$ by depth-first search} 
\label{alg:dfs}
\begin{algorithmic}[1]
\State{\textbf{function} $\dfs(u,n)$ \Comment{$u$=node, $n$=length of walk}}
    \If {$|u|=n$} break \Comment{walk reached level $n$}
    \EndIf
\State {$(a_1a_2\ldots a_\sigma)\gets$ random permutation of $\Sigma$}
\For {$j=1$ to $\sigma$}
    \If {$\cL(ua_j)$} $\dfs(ua_j,n)$\Comment{visit $ua_i$ next}
    \EndIf
\EndFor
\State {return \Comment{$u$ has no descendant at level $n$}}
\end{algorithmic}
\end{algorithm*}
\vspace*{-7mm}
\begin{lemma} \label{l:dfs}
$\dfs(\lambda,n)$ builds a length-$n$ random standard walk in $\cT(L)$.
\end{lemma}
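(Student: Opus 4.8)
The plan is to show that the depth-first search procedure visits each length-$n$ node $w$ with exactly the probability $P_{[n]}(w)$ prescribed by the definition of the random standard walk, or equivalently that it makes each branching choice uniformly among the children that survive in the trimmed tree $\cT_{[n]}$. The subtlety is that Algorithm~\ref{alg:dfs} chooses uniformly among \emph{all} children admitted by the predicate $\cL$ (i.e. children in the full tree $\cT$), not only among the ``live'' children that actually lead to level $n$ (the children in $\cT_{[n]}$). So the key claim to establish is that the backtracking mechanism exactly corrects for this discrepancy: conditioned on eventually breaking out at level $n$, the first successful descent from any node $u$ lands on each live child with equal probability.

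First I would observe that a call $\dfs(u,n)$ with $|u|<n$ terminates by \textbf{break} (reaching level $n$) if and only if $u$ has at least one descendant at level $n$, that is, iff $u$ is a node of $\cT_{[n]}$; otherwise the loop exhausts all children and the call returns. This is proved by downward induction on $n-|u|$: the base case $|u|=n$ is immediate, and for the inductive step a recursive call $\dfs(ua_j,n)$ propagates a \textbf{break} back to $u$ iff $ua_j\in\cT_{[n]}$, while calls on dead children return without effect. Thus from a node $u\in\cT_{[n]}$ the algorithm is guaranteed to reach level $n$, and it does so through the first child (in the random permutation order) that happens to lie in $\cT_{[n]}$.

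The heart of the argument is then the claim that, at a node $u\in\cT_{[n]}$ with $\ch_{[n]}(u)$ live children, the live child through which the walk descends is uniformly distributed. By symmetry of the uniform random permutation over $\Sigma$, the relative order of the $\ch_{[n]}(u)$ live children among themselves is itself a uniform random permutation, independent of where the dead children fall; since the descent follows the live child that comes first in this order, each live child is equally likely to be chosen, giving probability $1/\ch_{[n]}(u)$. Because the random permutations drawn at distinct nodes are independent, the probability that $\dfs(\lambda,n)$ traces a particular length-$n$ word $w$ is the product $\prod_{i=0}^{n-1} \bigl(\ch_{[n]}(w[1..i])\bigr)^{-1}$, which by Eq.~\eqref{e:Pn} equals $P_{[n]}(w)$, matching the distribution of $\eta_n$ from Lemma~\ref{l:probtrim}. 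Hence $\dfs(\lambda,n)$ produces exactly a length-$n$ random standard walk.

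The main obstacle I anticipate is making the ``uniformity among live children'' step fully rigorous, since the naive reading of the algorithm suggests a bias toward children with many dead siblings exploring first. The clean way to dispatch this is the permutation-symmetry observation above, avoiding any conditional-probability bookkeeping over the dead branches; an induction on $n-|u|$ that simultaneously tracks both the termination behaviour and the induced distribution on live children keeps the argument tight and routine once the symmetry principle is in place.
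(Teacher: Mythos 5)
Your proof is correct and follows essentially the same route as the paper's: the paper's (much terser) argument likewise rests on the two observations that all children of $u$ lying in $\cT(L)_{[n]}$ are equally likely to be chosen before the others, and that choosing a child outside $\cT(L)_{[n]}$ affects neither the outcome nor the probabilities of the remaining choices. Your version merely adds the explicit induction on termination and the product computation, which the paper leaves implicit.
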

\begin{proof}
Let $w$ be the last node of the walk built, $u$ be an arbitrary prefix of $w$. Among the children of $u$ in $\cT(L)_{[n]}$, all had equal probabilities to be chosen earlier than the others. The choice of a child of $u$ in $\cT(L)$ which does not belong to $\cT(L)_{[n]}$ would affect neither the result of search nor the probabilities of other children to be chosen next. Thus, the obtained walk is random by definition.\qed
\end{proof}

Consider the values of the counter $j$ in the instances $\dfs(\lambda,n),\dfs(w[1],n)$, $\ldots,\dfs(w[1..n{-}1],n)$ at the moment when the search reaches level $n$. We denote \emph{profile} of the constructed walk as the vector $\vec r=(r_1,\ldots,r_\sigma)$ such that $r_i$ is the number of instances of $\dfs$ in which $j=i$. Note that different runs of Algorithm~\ref{alg:dfs} may result in the same walk with different profiles (due to random choices made, depth-first search visits some dead ends and skips some others). Given a profile $\vec r$, one can compute the expected branching frequency $\Bf(\vec r)$ of a walk with this profile: $\Bf(\vec r)=\frac 1n\sum_{i=1}^\sigma c_i\log i$, where the parameters $c_i$ are computed in Theorem~\ref{t:profile} below.

\begin{theorem} \label{t:profile}
Let $\vec r=(r_1,\ldots,r_\sigma)$ be a profile of a length-$n$ random standard walk in a tree $\cT$.
For each $i=1,\ldots,\sigma$, let $c_i$ be the expected number of nodes, having exactly $i$ children in the tree $\cT_{[n]}$, in a random standard walk with the profile $\vec r$. Then 
\begin{equation} \label{e:profile}
(c_1,\ldots,c_\sigma)P=\vec r, 
\text{ where }
P[i,k]=\frac{\binom{\sigma-i}{k-1}}{\binom{\sigma}{k-1}}-
\frac{\binom{\sigma-i}{k}}{\binom{\sigma}{k}}
\text{ for }i,k=1,\ldots,\sigma.    
\end{equation}
\end{theorem}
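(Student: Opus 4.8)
The plan is to reduce the matrix identity to a single per-node computation and then assemble the global statement by linearity of expectation. The central object is the distribution of the recorded counter value $j$ at one node of the walk, conditioned on how many of its children survive in $\cT_{[n]}$. So the first task is to identify what $j$ records. In the instance $\dfs(u,n)$ at a node $u=w[1..t]$ of the successful path, the $\sigma$ letters are probed in the order of a uniformly random permutation of $\Sigma$. A child $ua$ with $ua\notin\cT_{[n]}$ is either rejected by $\cL$ or fully explored and abandoned as a dead end; in both cases the loop counter simply advances past it. A child with $ua\in\cT_{[n]}$, once entered, drives the search to level $n$ and halts it (this is exactly the correctness guaranteed by Lemma~\ref{l:dfs}). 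Hence the recorded value $j$ is precisely the position, within the random permutation of all $\sigma$ letters, of the first letter $a$ for which $ua\in\cT_{[n]}$. If $u$ has exactly $i$ children in $\cT_{[n]}$, this is the position of the first ``good'' letter in a uniform permutation of $i$ good and $\sigma-i$ bad letters.

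Second, I would compute this distribution, which is the combinatorial heart of the argument. Writing $B_m$ for the probability that the first $m$ probed letters are all bad, a direct falling-factorial count gives $B_m=\tfrac{(\sigma-i)!\,(\sigma-m)!}{(\sigma-i-m)!\,\sigma!}=\binom{\sigma-i}{m}\big/\binom{\sigma}{m}$. The first good letter sits at position exactly $k$ iff the first $k-1$ probes are bad but the first $k$ are not, so $\Pr[j=k]=B_{k-1}-B_k$, which is exactly the entry $P[i,k]$ of \eqref{e:profile}.

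Third, I would globalise by linearity of expectation. Along the path, the recorded values $J_0,\ldots,J_{n-1}$ arise from independent random permutations (distinct nodes carry independent choices), and by symmetry among the good children the position of the first good letter is independent of which good child is actually chosen; thus, given the walk, each $J_t$ has distribution $P[i_t,\cdot]$, where $i_t$ is the number of $\cT_{[n]}$-children of $w[1..t]$. Consequently the expected number of path-nodes recording value $k$ equals $\sum_{i=1}^{\sigma}c_i\,P[i,k]$, where $c_i$ is the expected number of nodes with exactly $i$ children in $\cT_{[n]}$; identifying this expected profile with $\vec r$ yields $r_k=\sum_{i=1}^{\sigma}c_i\,P[i,k]$ for every $k$, i.e.\ the asserted identity $(c_1,\ldots,c_\sigma)P=\vec r$.

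I expect the main obstacle to be the first step: rigorously arguing that the recorded counter is the position of the first $\cT_{[n]}$-child and nothing else --- in particular that dead-end children (explored but abandoned) and $\cL$-rejected letters advance $j$ in exactly the same way, and that entering any surviving child forces termination at level $n$. Once this identification is secured, $P[i,k]$ falls out of the elementary hypergeometric computation and the rest is bookkeeping. For the recovery procedure one also notes that $P$ is invertible: row $i$ vanishes beyond column $\sigma-i+1$, while the anti-diagonal entry $P[i,\sigma-i+1]=1/\binom{\sigma}{i}$ is nonzero, so the linear system \eqref{e:profile} can indeed be solved for the $c_i$.
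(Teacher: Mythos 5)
Your proposal is correct and follows essentially the same route as the paper's proof: identify the recorded counter at each node as the position of the first child surviving in $\cT_{[n]}$ within a uniform random permutation, compute $P[i,k]$ as the difference of the two "all-bad-so-far" probabilities $\binom{\sigma-i}{k-1}/\binom{\sigma}{k-1}-\binom{\sigma-i}{k}/\binom{\sigma}{k}$, and sum expected contributions over all calls by linearity of expectation. Your added care in checking that $\cL$-rejected letters and explored dead ends advance the counter identically, and your remark that $P$ is invertible, are welcome refinements but not departures from the paper's argument.
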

\begin{proof}
Each call to $\dfs(u,n)$, made during a length-$n$ random standard walk with the profile $\vec r$ in $\cT$, contributes 1 to one of the numbers $r_1,\ldots,r_{\sigma}$, depending on the random permutation generated within this call. Let $p_{ik}$ be the probability that a call for a node with $i$ children in the tree $\cT_{[n]}$ contributes to $r_k$. Then the expected contribution of such a call to $\vec r$ is $(p_{i1},\ldots,p_{i\sigma})$. The sum of expected contributions of all calls ($c_1$ calls for nodes with 1 child, \ldots, $c_\sigma$ calls for nodes with $\sigma$ children) should be equal to $\vec r$: $r_k=\sum_{i=1}^\sigma c_i p_{ik}$ for $k=1,\ldots,\sigma$. Hence $\vec r=(c_1,\ldots,c_\sigma)P$, where $P[i,k]=p_{ik}$. To compute $p_{ik}$, note that contributing to $r_k$ means that $\dfs(u,n)$ executed $k-1$ unsuccessful calls to $\dfs$ before a successful call. Since there are $\sigma-i$ choices for an unsuccessful call, $p_{ik}$ equals the difference between $\binom{\sigma-i}{k-1}/\binom{\sigma}{k-1}$ (the probability that the first $(k{-}1)$ calls are unsuccessful) and $\binom{\sigma-i}{k}/\binom{\sigma}{k}$ (the probability that the first $k$ calls are unsuccessful).\qed
\end{proof}
\begin{example} \label{ex:profiles}
Let us solve \eqref{e:profile} for $\sigma=2$ (left) and $\sigma=3$ (right):\\
\begin{minipage}{4.5cm} 
$(c_1,c_2) \begin{bmatrix}
\frac 12&\frac 12\\ 1&0\end{bmatrix} =(r_1,r_2)$\\[2mm]
$c_1=2r_2$, $c_2=r_1-r_2$\\
$\Bf(\vec r)=\frac{r_1-r_2}{r_1+r_2}$
\end{minipage}
\begin{minipage}{7cm} 
$(c_1,c_2,c_3)\begin{bmatrix}
\frac 13&\frac 13&\frac 13\\ \frac 23&\frac 13&0\\ 1&0&0\end{bmatrix} =(r_1,r_2,r_3)$\\
$c_1=3r_3$, $c_2=3r_2-3r_3$, $c_3=r_1-2r_2+r_3$\\
$\Bf(\vec r)=\frac{3(r_2-r_3)+(r_1-2r_2+r_3)\log 3}{r_1+r_2}$
\end{minipage}
\end{example}

\section{Experimental Results} \label{s:exper}
\subsection{Regular approximations of power-free languages} \label{ss:regappr}

This was a side experiment in the comparison of general entropy and Markov entropy for power-free languages. We took the ternary square-free language $\SF=\PF(3,2)$, which is a well-studied test case. Its growth rate $\gr(\SF)\approx 1.30176$ is known with high precision \cite{Sh12rev} from the study of its \emph{regular approximations}. A $k$th regular approximation $\SF_k$ of $\SF$ is the language of all words having no squares of period $\le k$ as factors. The sequence $\{\gr(\SF_k)\}$ demonstrates a fast convergence to $\gr(\SF)$. So we wanted to (approximately) guess the Markov entropy $\mu(\SF)$ extrapolating the  initial segment of the sequence $\mu(\SF)_k$.

The results are as follows: we computed the values $\mu(\SF_k)$ up to $k=45$ with absolute error $\delta<10^{-8}$ using the technique from Section~\ref{ss:reg}. We obtained $\mu(\SF_{45})\approx0.36981239$; the extrapolation of all obtained values gives $0.369810<\mu(\SF)<0.369811$. At the same time we have $H(\SF)=\log(\gr(\SF))\approx 0.380465$, so the two values are clearly distinct but close enough.

\subsection{Random walks in power-free languages} \label{ss:expPF}

To perform experiments with length-$n$ random walks for a language $L$, one needs an algorithm to compute $\cL(w)$ to be used with Algorithm~\ref{alg:dfs}. A standard approach is to maintain a data structure over the current word/walk $w$, which quickly answers the query ``$w\in L$?'' and supports addition/deletion of a letter to/from the right. The theoretically best such algorithm for power-free words was designed by Kosolobov \cite{Kos15}: it spends $O(\log n)$ time per addition/deletion and uses memory of size $O(n)$. However, the algorithm is complicated and the constants under $O$ are big. We developed a practical algorithm which is competitive for the walks up to the length of several millions. For simplicity, we describe it for square-free words but the construction is the same for any power-free language. 

We use arrays $\rep_i[1..n]$, $i=0,\ldots,\lfloor \log n\rfloor-1$ to store previous occurrences of factors. Namely, if $|u|\ge j$ for the current word $u$, then $\rep_i[j]$ is the \emph{last} position of the previous occurrence of the factor $u[j{-}2^i{+}1..j]$ or $-\infty$ if there is no previous occurrence. Deletion of a letter from $u$ is performed just by deleting the entries $\rep_i[|u|]$; let us consider the procedure $\add(u,a)$ (Algorithm~\ref{alg:2free}) which adds the letter $a$ to $u$, checks square-freeness of $ua$ and computes $\rep_i[|u|+1]$. The auxiliary array $\last[1..\sigma]$ stores the rightmost position of each letter in the current word.

\begin{algorithm*}[!htb]
\caption{Online square detection: adding a letter} 
\label{alg:2free}
\begin{algorithmic}[1]
\State{\textbf{function} $\add(u,a)$ \Comment{$u$=word, $a$=letter to add}}
\State{$\rep_0[|u|+1]\gets \last[a]$; $\last[a]\gets |u|+1$ \Comment{fill previous occurrence of $a$}}
\State{$\free\gets \mathsf{true}$ \Comment{square-freeness flag}}
\For {$i=0$ to $\lfloor \log n\rfloor-1$}
    \State{$x\gets \rep_i[|u|{+}1]$; $p=|u|+1-x$ \Comment{$p$ is the possible period of a square}}
    \If {$p\le 2^{i+1}$ and $\rep_i[x{+}2^i] = x+2^i -p$} $\free\gets \mathsf{false}$; break \Comment{Fig.~\ref{f:detect}}
    \EndIf
    \If {$i=\lfloor \log n\rfloor-1$} break \Comment{no more arrays to update}
    \EndIf
    \State{compute $\rep_{i+1}[|u|+1]$ \Comment{from $\rep_i$}}
    \If {$\rep_{i+1}[|u|+1]=-\infty$} break \Comment{all repeated suffixes processed}
    \EndIf
\EndFor
\State {return $\free$ \Comment{the answer to ``is $ua$ square-free?''}}
\end{algorithmic}
\end{algorithm*}

\begin{wrapfigure}[6]{r}{6cm}
\centering
    \includegraphics[trim = 45 768 380 37, clip]{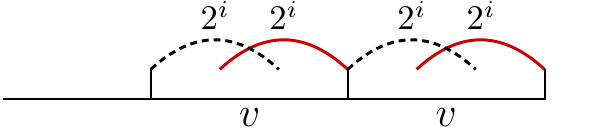}
    \caption{\vspace*{-3mm} Detecting a square by Algorithm~\ref{alg:2free}.}
\label{f:detect}
\end{wrapfigure}
\paragraph*{Correctness.} Recall that $u$ is square-free, so the occurrences of a factor of $u$ can neither overlap nor touch. Assume that $ua$ ends with a square $vv$, $p=|v|$, $2^i<p\le 2^{i+1}$. Then $p$ will be found in line~5 as $|u|+1-\rep_i[|u|{+}1]$ (red arcs in Fig.~\ref{f:detect} show the suffix of length $2^i$ and its previous occurrence). The condition in line~6 means exactly the equality of words marked by dash arcs in Fig.~\ref{f:detect}; thus, $vv$ is detected and $\add(u,a)$ returns $\mathsf{false}$. For the other direction, if $\add(u,a)=\mathsf{false}$, then the condition in line~6 held true and thus a square was detected as in Fig.~\ref{f:detect}.

\paragraph*{Time complexity.} Everything except line~8 takes $O(\log n)$ time. Let $vz$ be the suffix of $ua$ such that $|v|=|z|=2^i$. To find the previous occurrence of $vz$, we scan in $\rep_i$ the occurrences of $z$ right to left, looking for an occurrence preceded by $v$ (the occurrences of $v$ are also scanned right to left). On expectation, we will check $O(s_{|z|})$ occurrences of $z$ before finding $vz$, where $s_k$ is the number of $\sigma$-ary square-free words of length $k$. On the other hand, the expected number of occurrences of $z$ in $ua$ is $O(n/s_{|z|})$. The same bounds apply to $v$. Hence the total number of array entries accessed in line~8 during one call to $\add$ is, on expectation, $\sum_{i=0}^{\lfloor \log n\rfloor-1} \min\big\{O(s_{2^i}),O(\frac n{s_{2^i}})\big\}$. As $s_k$ grows exponentially with $k$, this sum is $O(\sqrt{n})$. Hence the expected time for one call to $\add$ is $O(\sqrt{n})$. The experiments confirm this estimate.

\paragraph*{Experiments.} We studied the following languages: binary cube-free and ternary square-free languages as typical ``test cases'', threshold languages over 3,\ldots,10, 20, 50, and 100 letters, and $\PF(2,{\frac73}^+)$ as the smallest binary language of exponential growth. All non-binary languages from this list have ``essentially binary'' prefix trees: a letter cannot coincide with any of $(\sigma{-}2)$ preceding letters, and so any node of level at least $\sigma{-}2$ has at most two children. Hence we computed expected branching frequencies of walks as in Example~\ref{ex:profiles}. For each of the languages we computed profiles of 1000 walks of length $10^5$ and 100 walks of length $10^6$. For the tables with the obtained data see \cite{Stat}. We briefly analysed the data. The most interesting findings, summarized below, are the same for each of the studied languages. Some figures are presented in Table~\ref{tab:pf}.\\
\textbf{1}. The profiles of all walks in $\cT=\cT(L)$ are close to each other. To be precise, assume that $\Bf(\vec r)=\mu_n(L)$ for all constructed profiles. Then the number $r_2$ computed for a length-$n$ random walk is the number of heads in $c_1$ tosses of a fair coin (among $c_1$ nodes with two children, in $r_2$ cases the dead end was chosen first). Hence the computed values of $r_2$ form a sample from the binomial distribution $B(c_1,\frac{1}{2})$. And indeed, the set of computed $r_2$'s looks indistinguishable from such a sample; see \cite[stat100000]{Stat}. This property suggests the mean value of $\Bf(\vec r)$ over all experiments as a good approximation of $\mu_n(L)$.\\
\textbf{2}. The 99\% confidence interval for the mean branching frequency $\Bf(\cT_[|w|],w)$ of the 1000 constructed walks of length $10^5$ is of length $\sim 4\cdot 10^{-4}$ and includes the mean value of $\Bf(\vec r)$ for the walks of length $10^6$. For the language $\SF$, this interval also includes the value $\mu(\SF)$ conjectured in Section~\ref{ss:regappr}. This property suggests that $\mu_n(L)$ for such big $n$ is close to the Markov entropy $\mu(L)$.\\
\textbf{3}. As $\mu(L)\le H(L)=\log(\gr(L))$, the value of $\mu(L)$ can be converted to the lower bound for the growth rate of $L$. The values $2^{\textrm{mean}(\Bf(\vec r))}$ from our experiments differ from the best known upper bounds for the studied languages (see \cite[Tbl.\,A1-A3]{Sh12rev}) by the margin of 0.004-0.018. Such a bound is quite good  for all cases where specialized methods \cite{KoRa11,Sh09dlt} do not work. The results for threshold languages support the Shur--Gorbunova conjecture \cite{ShGo10} that the growth rates of these languages tend to the limit $\alpha\approx 1.242$ as the size of the alphabet approaches infinity.
\vspace*{-5mm}
\begin{table}[!htb]
$$    \begin{array}{|c|c|c|c|c|}
    \hline
\text{Language}&\text{Mean } \Bf(\vec r)\ (10^5)&
\text{Mean } \Bf(\vec r)\ (10^6)&2^{\mu(L)}& \gr(L)\\
\hline
\PF(2,{\frac 73}^+)&0.27221&0.27220&\approx1.20766&\approx 1.22064\\
\PF(2,3)&0.52562&0.52553&\approx1.43956&\approx 1.45758\\
\PF(3,{\frac 74}^+)&0.30249&0.30251&\approx1.23327&\approx 1.24561\\
\PF(3,2)&0.36988&0.36987&\approx1.29223&\approx 1.30176\\
\PF(4,{\frac 75}^+)&0.09137&0.09151&\approx1.06535&< 1.06951\\
\PF(5,{\frac 54}^+)&0.20279&0.20265&\approx1.15092&< 1.15790\\
\PF(6,{\frac 65}^+)&0.28536&0.28526&\approx1.21871&< 1.22470\\
\PF(7,{\frac 76}^+)&0.29753&0.29749&\approx1.22903&< 1.23690\\
\PF(8,{\frac 87}^+)&0.28881&0.28867&\approx1.22163&< 1.23484\\
\PF(9,{\frac 98}^+)&0.30716&0.30732&\approx1.23727&< 1.24668\\
\PF(10,{\frac {10}9}^+)&0.29674&0.29669&\approx1.22836&< 1.23931\\
\PF(20,{\frac {20}{19}}^+)&0.30002&0.29982&\approx1.23099&< 1.24205\\
\PF(50,{\frac {50}{49}}^+)&0.30006&0.29970&\approx1.23089&< 1.24210\\
\PF(100,{\frac {100}{99}}^+)&0.30047&0.29974&\approx1.23093&< 1.24210\\
\hline
    \end{array} $$
    \caption{Markov entropy for power-free languages: experiments}
    \label{tab:pf}
\end{table}
\vspace*{-9mm}
\subsection{Random walks in Abelian power-free languages}

Similar to Section~\ref{ss:expPF}, we need an algorithm checking Abelian power-freeness. Here we describe an algorithm detecting Abelian squares; its modification for other integer powers is straightforward. If a word $w[1..n]$ is fixed, we  let $\vec V_i=$ $\vec V(w[n{-}i{+}1..n])-\vec V(w[n{-}2i{+}1..n{-}i])$. A simple way to find whether $w$ ends with Abelian square is to check 
$\vec V_i=\vec0$ for all $i$. Since $\vec V_{i+1}$ can be obtained from $\vec V_i$ with a constant number of operations (add $w[n{-}i]$ twice, subtract $w[n{-}2i]$ and $w[n{-}2i{-}1]$), this check requires $\Theta(n)$ time. However, $\Theta(n)$ time per iteration appeared to be too much to perform experiments comparable with those for power-free languages, so we developed a faster algorithm. It maintains two length-$n$ arrays for each letter $a\in\Sigma$: $d_a[i]$ is the position of $i$th from the left letter $a$ in the current word $w$ and $c_a[i]$ is the number of occurrences of $a$ in $w[1..i]$ (i.e., a coordinate of $\vec V(w[1..i])$). When a letter is added/deleted, these arrays are updated in $O(1)$ time (we regard $\sigma$ as a constant). The function $\asq(u)$ (Algorithm~\ref{alg:a2free}) checks whether the word $w$ has an Abelian square as a suffix. 
\begin{algorithm*}[!htb]
\caption{Online Abelian square detection} 
\label{alg:a2free}
\begin{algorithmic}[1]
\State{\textbf{function} $\asq(u)$ \Comment{$u$=word}}
\State{$l\gets |u|-1$; $i\gets 1$ \Comment{two counters}}
\State{$\free\gets \mathsf{false}$ \Comment{square-freeness flag; turns true when check finishes}}
\While {\textbf{not} $\free$}
    \For {$a\in\Sigma$}
        \State{$\vec V[a]\gets c_a[|u|]-c_a[|u|{-}i]$ \Comment{$a$-coordinate of $\vec V(u[|u|{-}i{+}1..|u|])$}}
        \If {$\vec V[a]> c_a[|u|{-}i]$} $\free\gets \mathsf{true}$; break \Comment{no squares possible}
        \EndIf
        \State{$l \gets \min\{l, d_a[c_a[|u|{-}i]-\vec V[a]+1]\}$}
    \EndFor
    \If {$l=|u|-2i+1$} break \Comment{$u[|u|-2i+1..|u|]$ is an Abelian square}
    \EndIf
    \State{$i\gets \lceil(|u|-l+1)/2\rceil$ }
\EndWhile
\State {return $\free$ \Comment{the answer to ``is $ua$ Abelian square-free?''}}
\end{algorithmic}
\end{algorithm*}
\paragraph*{Correctness.} We show by induction the following property of Algorithm~\ref{alg:a2free}: at the beginning of each \textbf{while} cycle iteration, the word $u[l{+}1..|u|]$ is known to be Abelian square-free. The base case is obvious: $l{+}1=|u|$ before the first iteration. For the step case, assume that the property holds for some iteration. The value of $i$ assigned in line~10 satisfies $|u|{-}2i{+}1\le l < |u|{-}2i{+}3$. Hence the suffix $vz$ of $u$ with $|v|=|z|=i$ is the shortest one that can be an Abelian square. During the \textbf{for} cycle of the next iteration, $l$ receives the maximum value such that the word $Zz=u[l..|u|]$ satisfies $\vec V(z)\le \vec V(Z)$  (coordinate-wise). If $|Z|=i$, an Abelian square is found (line 9); otherwise, all proper suffixes of $Zz$ are Abelian square-free, which is exactly the property we are proving. Finally, if no such $Z$ exists (it was detected in line~7 that some letter $a$ occurs in $z$ more often than in the remaining part of $u$) then $u$ is proved Abelian square-free. Indeed, all longer suffixes of $u$ contain $z$ and thus more than half occurrences of $a$.
\paragraph*{Time Complexity.} In our experiments, Algorithm~\ref{alg:a2free} checked, on average, about $2\sqrt{n}$ suffixes of a length-$n$ word, but we have no theoretical proof of this fact.
\paragraph*{Experiments.} The structure and growth of Abelian-power-free languages are little studied. We considered the 4-ary Abelian-square-free language $\ASF$, the ternary Abelian-cube-free language $\ACF$, and the binary Abelian-4-power-free language $\A4F$; see Table~\ref{tab:apf}. Our main interest was in estimating the actual growth rate of these languages. The upper (resp. lower) bounds for the growth rates are taken from \cite{SaSh12} (resp., from \cite{Car98,Cur04,ACR04}). 
For $\ASF$ and $\ACF$ we got profiles of 500 walks of length $10^5$ and 100 walks of length $5{\cdot}10^5$; for $\A4F$, 100 profiles of walks of length $10^5$ were computed. The results suggest that the automata-based upper bounds for the growth rates of Abelian-power-free languages are quite imprecise, in contrast with the case of power-free languages. In addition, the experiments discovered the existence of very big finite subtrees on relatively low levels, which slow down the depth-first search. In fact, to obtain long enough words from $\A4F$ we modified the $\dfs$ function to allow ``forced'' backtracking if the length of the constructed word does not increase for a long time. Even with such a gadget, the time to build one walk of length $10^5$ varied from 9 minutes to 4 hours.
\begin{table}[!htb]
$$    \begin{array}{|c|c|c|c|c|}
    \hline
\text{Language}&\text{Mean } \Bf(\vec r)\ (10^5)&
\text{Mean } \Bf(\vec r)\ (5{\cdot}10^5)&2^{\mu(L)}& \gr(L)\\
\hline
\ASF&0.20475&0.20337&\approx1.15138&<1.44435;\ >1.00002\\
\ACF&1.08439&1.08418&\approx2.12017&< 2.37124;\ >1.02930\\
\A4F&0.20736&-&\approx1.15457&< 1.37417;\ >1.04427\\
\hline
    \end{array} $$
    \caption{Markov entropy for Abelian-power-free languages: experiments}
    \label{tab:apf}
\end{table}

\section{Conclusion and Future Work}

In this paper we showed that efficient sampling of very long random words is a useful tool in the study of factorial languages. Already the first experiments allowed us to state a lot of problems for further research. To mention just a few:\\
- for which classes of languages, apart from regular ones, the Markov entropy can be computed (or approximated with a given error)?\\
- are there natural classes of languages satisfying $\mu(L)=H(L)$? $\mu(L)\ll H(L)$?\\
- how the branching frequencies of walks in a prefix tree are distributed? which statistical tests can help to approximate this distribution?

Concerning the last questions, we note that though our experiments showed ``uniformity'' of branching frequencies in each of the studied languages, the frequencies of individual words can vary significantly. For example, the language $\PF(2,3)$ with the average frequency about $0.525$ contains  infinite words $\mathbf{u}$ and $\mathbf{v}$ satisfying $\Bf(\mathbf{u})=0.72$ and $\Bf(\mathbf{v})<0.45$ \cite{PeSh21}.

\bibliographystyle{splncs04}
\bibliography{bibl}

\end{document}